\documentclass[11pt]{article} 
\usepackage[dvipsnames,svgnames,x11names]{xcolor}
\usepackage{amsmath, amsthm, thm-restate}
\usepackage{xspace}
\usepackage{float}
\usepackage[T1]{fontenc}
\usepackage{mdframed}
\usepackage{mathtools}
\usepackage{enumitem}

\usepackage[linesnumbered,ruled]{algorithm2e}

\usepackage{algpseudocode}
\usepackage{authblk}

\usepackage{fullpage}
\usepackage{thmtools}
\usepackage[colorlinks,hypertexnames=false]{hyperref}
\hypersetup{colorlinks={true},linkcolor={blue},citecolor=magenta}

\usepackage{amssymb,amsfonts,amscd,dsfont,mathrsfs}
\usepackage{complexity}
\usepackage{multirow}
\usepackage{mathpazo}
\usepackage{braket}
\usepackage{comment}
\usepackage{quantikz}
\usepackage[
backend=biber,
style=alphabetic,
sorting=ynt,
maxnames=99
]{biblatex}
\usepackage{url}
\usepackage{bbm}
\usepackage[capitalize]{cleveref}
\usepackage{hyperref} 
\usepackage[margin=1in]{geometry}
\hypersetup{breaklinks=true}

\theoremstyle{plain}
\newtheorem{theorem}{Theorem}[section]
\newtheorem{lemma}[theorem]{Lemma}
\newtheorem*{claim*}{Claim}

\newtheorem{claim}[theorem]{Claim}

\newtheorem*{theorem*}{Theorem}
\newtheorem*{definition*}{Definition}

\newtheoremstyle{boxed}{-\topsep}{}{\itshape}{}{\bfseries}{.}{.5em}{}
\theoremstyle{boxed}

\newmdtheoremenv[backgroundcolor=gray!10,
                 linewidth=0pt,
                 innerleftmargin=4pt,
                 innerrightmargin=4pt,
                 innertopmargin=4pt,
                 innerbottommargin=4pt,
            splitbottomskip=4pt,
            nobreak=true]{problem}[prob]{Problem}

\newmdtheoremenv[backgroundcolor=gray!10,
                 linewidth=0pt,
                 innerleftmargin=4pt,
                 innerrightmargin=4pt,
                 innertopmargin=1pt,
                 innerbottommargin=5.5pt,
            splitbottomskip=4pt]{conjecture}[conj]{Conjecture}

\let\originalleft\left
\let\originalright\right
\renewcommand{\left}{\mathopen{}\mathclose\bgroup\originalleft}
\renewcommand{\right}{\aftergroup\egroup\originalright}

\mathchardef\mhyphen="2D

\newcommand{\norm}[1]{\left\Vert {#1} \right\Vert}

\makeatletter
\renewcommand\AB@authnote[1]{\rlap{\textsuperscript{\normalfont#1}}}

\makeatother

\title{On State Distinguishing Without Inverse Queries}
\author[1]{Emma Wang}
\author[2]{Joseph Carolan}
\author[2]{Andrew M. Childs}

\affil[1]{White Station High School}
\affil[2]{Department of Computer Science, Institute for Advanced Computer Studies, and Joint Center for Quantum Information and Computer Science, University of Maryland}

\date{}

\begin{document}

\maketitle

\begin{abstract}
In the quantum state distinguishing problem, an algorithm is given copies of a state $\ket{\psi}$ or $\ket{\phi}$ separated by trace distance $\varepsilon$, and tasked with deciding which is the case. It is a standard result that $\Theta(\varepsilon^{-2})$ copies are required, while having access to a unitary state preparation oracle and its inverse improves this to $\Theta(\varepsilon^{-1})$ queries. We study the intermediate case with forward-only access, capturing scenarios where the oracle is a physical process that cannot be reversed. Tang and Wright recently showed that distinguishing a known state from an \emph{unknown} close state requires $\Omega(\varepsilon^{-2})$ forward queries in high dimension; we strengthen this result by proving the same bound even when both states are known, as in the state distinguishing setting. Moreover, our lower bound uses a direct hybrid argument, avoiding advanced compressed oracle techniques used in the prior work.
In contrast, we give a dimension-independent algorithm solving a continuous-time variant of state distinguishing with only $O(\varepsilon^{-1})$ total forward evolution time. The algorithm uses quantum Zeno dynamics to confine evolution to a known two-dimensional subspace. Thus, a quadratic advantage survives without inverse access for a natural continuous-time analogue of state distinguishing.
\end{abstract}

\section{Introduction}

On a digital quantum computer, implementing a unitary $U$ described by an explicit quantum circuit is no easier nor harder than implementing the inverse unitary $U^\dagger$. However, when the unitary $U$ represents a physical process or laboratory experiment, then it can be much harder to implement $U^\dagger$, in some cases as hard as reversing time within the system~\cite{TangWright25}. In the generic case, $\Omega(2^{2n})$ uses of an $n$-qubit black-box unitary are needed to implement its inverse deterministically~\cite{odake} or even approximately~\cite{CYZ26}, so this task is intractable in large dimensions. This motivates the study of query algorithms that only have forward access to an unknown unitary oracle.

Examples that do not satisfy this criterion are amplitude amplification and amplitude estimation, which generalize Grover's quadratic speedup. These algorithms involve repeatedly alternating a state-preparation procedure with its inverse~\cite{Grover96,BHMT02} to amplify an ``accepting'' component with amplitude $\varepsilon$. Tang and Wright recently proved that this is necessary, by exhibiting an instance for which an oracle \(U\) alone requires \(\Omega(\varepsilon^{-2})\) queries in large dimension, while access to both \(U\) and \(U^\dagger\) permits the superior \(O(\varepsilon^{-1})\) scaling~\cite{TangWright25}.

In particular, the amplitude amplification problem is formalized in~\cite{TangWright25} by asking an algorithm to amplify an unknown state $\ket{\phi_{\operatorname{good}}}$, given oracle access to a unitary that prepares a state with some amplitude on $\ket{\phi_{\operatorname{good}}}$ tensored with a success flag, and the remaining amplitude on a junk state $\ket{\phi_{\operatorname{bad}}}$ tensored with a failure flag. A succeeding algorithm must distinguish such an oracle from an oracle that has no amplitude on the success branch, which the authors use to show lower bounds. The algorithm necessarily does not know the description of $\ket{\phi_{\operatorname{good}}}$.

In this work we study whether this barrier persists in the \emph{state distinguishing} setting, where we are trying to distinguish unitaries preparing one of two \emph{known} states. This contrasts with the above formalization of amplitude amplification, which can be viewed as distinguishing an \emph{unknown} state with small amplitude on a success flag from an \emph{unknown} state with no amplitude on a success flag. 
We study two related settings:
\begin{enumerate}
    \item The \emph{discrete-time} setting, where the unknown oracle is a unitary operation $U$ preparing one of two known states.
    \item The \emph{continuous-time} setting, where the unknown operation is generated by a Hamiltonian $H$ whose first column is promised to be one of two possibilities, and arbitrarily short \emph{forward} evolutions are available.
\end{enumerate}

In the first case, we show that state distinguishing is hard, in particular requiring $\Theta(\varepsilon^{-2})$ queries in sufficiently high dimension, strengthening the main result of~\cite{TangWright25}. Moreover our result relies only on a basic hybrid argument, avoiding the need for compressed oracles and other advanced machinery used in the prior work.

On the other hand, we identify a continuous-time version of state distinguishing can be solved using dimension-independent $O(\varepsilon^{-1})$ total evolution time. This provides a natural Hamiltonian analog of the problem studied in~\cite{TangWright25} in which the Grover-type quantum advantage persists even without inverse access. Note that our formulation of continuous-time state distinguishing involves a promise that the action of the Hamiltonian on $\ket{0}$ is one of two known values, i.e., a promise on the first column of the Hamiltonian. This is not the same as the ability to apply arbitrary fractional powers of the discrete unitary in the discrete query problem, as such fractional powers need not satisfy this promise.

\subsection{Technical Overview}

In more detail, we define the discrete-time state distinguishing problem as follows. Following Tang and Wright~\cite{TangWright25}, we consider the high-dimensional setting where $d$ is super-polynomial in $\varepsilon^{-1}$. This setting captures laboratory experiments with many degrees of freedom.

\begin{problem}[Discrete-time State Distinguishing]
Let $\varepsilon \in (0,1]$ and $U\in\mathbb{C}^{d\times d}$ for $d \ge 2$ be an unknown unitary, given through forward-only query access, promised that exactly one of the following holds:
\begin{align*}
    \textnormal{\textbf{YES:}} \qquad & U\ket{0}=\ket{0},&
    \textnormal{\textbf{NO:}} \qquad & U \ket{0} = \sqrt{1-\varepsilon^2} \ket{0} + \varepsilon \ket{1}.
\end{align*}
The goal is to determine which case holds, given knowledge of $\varepsilon$.
\label{prob:disc-sd}
\end{problem}

Note that this problem is potentially easier than the full amplitude amplification task studied in \cite{TangWright25}, in that the \textbf{NO} case prepares a fixed target state. While we do not explicitly consider controlled access, all our lower bounds apply in the case where it is available as well.

Our first result is a proof, using a direct hybrid argument, that this problem requires $\Omega(\varepsilon^{-2})$ queries. In particular, we analyze distinguishing a random in-place permutation on $\ket{1}, \dots, \ket{d-1}$ from the same permutation followed by a small rotation on the span of $\ket{0}, \ket{1}$. It is known that inverting an in-place permutation is hard, which implies that the algorithm cannot put non-trivial amplitude on $\ket{\pi^{-1}(1)}$. Given this fact, we show that these two oracles can increase the distance of two $\delta$-separated states by at most $\sqrt{\delta^2+O(\varepsilon^2)}$ per query, which in turn shows that $\Omega(\varepsilon^{-2})$ queries are required. A matching upper bound follows from, for instance, using $U$ to directly prepare copies of the two states and running standard state discrimination. This gives the following result, proven in \cref{sec:lower}.

\begin{theorem}
    \Cref{prob:disc-sd} requires $\Theta(\varepsilon^{-2})$ queries in sufficiently high dimension ($d=\Omega(\varepsilon^{-8}$)).
    \label{thm:main-lower}
\end{theorem}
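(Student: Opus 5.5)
The upper bound is immediate. Each query to $U$ on $\ket{0}$ yields one copy of either $\ket{0}$ or $\sqrt{1-\varepsilon^2}\ket{0}+\varepsilon\ket{1}$. These two states have overlap $\sqrt{1-\varepsilon^2}$, so their trace distance is $\varepsilon$. Running the Helstrom measurement on $k$ copies succeeds with probability $\tfrac12+\tfrac12\sqrt{1-(1-\varepsilon^2)^k}$, which is bounded away from $\tfrac12$ once $k=O(\varepsilon^{-2})$. (Even the crude test "measure in the computational basis and look for outcome $1$" suffices.) The substance is therefore the lower bound.

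For the lower bound I will use a hard pair of oracle distributions. Let $\pi$ be a uniformly random permutation of $\{1,\dots,d-1\}$ and let $P_\pi$ fix $\ket{0}$ and map $\ket{j}\mapsto\ket{\pi(j)}$. Set $U_{\mathrm{yes}}=P_\pi$ and $U_{\mathrm{no}}=R_\varepsilon P_\pi$, where $R_\varepsilon$ is the rotation on $\mathrm{span}\{\ket{0},\ket{1}\}$ sending $\ket{0}\mapsto\sqrt{1-\varepsilon^2}\ket{0}+\varepsilon\ket{1}$ and acting as the identity elsewhere. Both oracles satisfy their promises. Note that $U_{\mathrm{no}}-U_{\mathrm{yes}}=(R_\varepsilon-I)P_\pi$ acts nontrivially only on inputs whose image under $P_\pi$ lies in $\mathrm{span}\{\ket 0,\ket 1\}$, i.e.\ on $\mathrm{span}\{\ket 0,\ket{\pi^{-1}(1)}\}$.

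Fix an algorithm making $T$ queries with arbitrary interleaved unitaries on a workspace. Let $\ket{\psi_t}$ and $\ket{\phi_t}$ be the states after $t$ queries in the two worlds, for a fixed $\pi$, and define $\delta_t^2=\mathbb{E}_\pi\norm{\ket{\psi_t}-\ket{\phi_t}}^2$. At step $t+1$ I will decompose
\[\ket{\psi_{t+1}}-\ket{\phi_{t+1}}=U_{\mathrm{yes}}(\ket{\psi_t}-\ket{\phi_t})+(U_{\mathrm{yes}}-U_{\mathrm{no}})\ket{\phi_t}.\]
The term $(R_\varepsilon-I)P_\pi\ket{\phi_t}$ splits into two parts. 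The component of $\ket{\phi_t}$ on $\ket{0}$ contributes a vector $v$ with $\norm{v}=O(\varepsilon)$. The component on $\ket{\pi^{-1}(1)}$ contributes a vector of size $O(\varepsilon\, a_t)$, where $a_t$ is the amplitude of $\ket{\phi_t}$ on $\ket{\pi^{-1}(1)}$, so this part is negligible if $a_t$ is small.

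The key point is that the $\ket 0$ part $v=(R_\varepsilon-I)\Pi_0\ket{\phi_t}$ is almost orthogonal to $U_{\mathrm{yes}}(\ket{\psi_t}-\ket{\phi_t})$ in the relevant sense. It has a component of size $\varepsilon\cdot\|\Pi_0\phi_t\|$ along $\ket 1$, plus a second-order component of size $O(\varepsilon^2)$ along $\ket 0$. The cross term with $\ket 1$ involves the amplitude of $P_\pi(\ket{\psi_t}-\ket{\phi_t})$ on $\ket 1$, which equals the amplitude of $\ket{\psi_t}-\ket{\phi_t}$ on $\ket{\pi^{-1}(1)}$, and this is again small. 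Hence
\[\delta_{t+1}^2\le\delta_t^2+O(\varepsilon^2)+O\bigl(\varepsilon\,\mathbb{E}_\pi[|a_t|+|b_t|]\bigr)+O(\varepsilon^2\delta_t),\]
where $b_t$ is the corresponding amplitude for $\ket{\psi_t}$. The $\varepsilon^2\delta_t$ term comes from the $O(\varepsilon^2)$ component along $\ket0$, and it is harmless because $\delta_t\le 2$. Summing over $t\le T$ gives $\delta_T^2\le O(T\varepsilon^2)+O(\varepsilon\sum_t\mathbb{E}|a_t|)$, and a constant distinguishing advantage requires $\delta_T=\Omega(1)$.

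The main obstacle is showing that $\mathbb{E}_\pi\bigl[\text{amplitude of the query register on }\ket{\pi^{-1}(1)}\text{ at step }t\bigr]$ is small. This is the in-place permutation inversion hardness cited in the overview, and it must hold uniformly in both worlds. For the YES world this is the known bound: after $t$ forward queries to a random in-place permutation, the squared weight on $\ket{\pi^{-1}(1)}$ is $O(\mathrm{poly}(t)/d)$. I will try to prove it by a hybrid or symmetrization argument. Condition on the values of $\pi$ that have been "touched", and observe that $\pi^{-1}(1)$ is nearly uniform among the remaining indices, so any fixed state has expected weight about $1/d$ there. Then bound the growth per query, obtaining something like $\mathbb{E}|a_t|\le O(t^2/\sqrt d)$. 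The NO world differs from the YES world only through $R_\varepsilon$, and since $R_\varepsilon$ touches only $\ket0,\ket1$, the same argument should go through with $R_\varepsilon P_\pi$ in place of $P_\pi$. The error term is then $\varepsilon\sum_{t\le T}\mathbb{E}|a_t|=O(\varepsilon T^3/\sqrt d)$. With $T=c\varepsilon^{-2}$ this is $O(\varepsilon^{-5}/\sqrt d)$, which is $o(1)$ once $d=\Omega(\varepsilon^{-8})$ or a little more; I expect the exact exponent to come from tuning the inversion bound. With the error term negligible, $\delta_T^2=O(T\varepsilon^2)$, which forces $T=\Omega(\varepsilon^{-2})$.
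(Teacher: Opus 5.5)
Your upper bound, hard instance and per-query recursion are sound. The bound $\delta_{t+1}^2\le\delta_t^2+O(\varepsilon^2)+O(\varepsilon\,\mathbb{E}_\pi[|a_t|+|b_t|])$ is what the paper's one-query progress lemma gives, where it is written as an exact identity for the increment. The gap is the amplitude bound. It is the only nontrivial input, and you neither prove it nor correctly reduce it to a citable result.

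Your sketch conditions on the values of $\pi$ already \emph{touched}, then argues that $\pi^{-1}(1)$ is nearly uniform on the rest. That is a classical-query intuition. With superposition queries to an in-place oracle there is no well-defined set of touched inputs, and making it rigorous would need compressed-oracle-type machinery, which is exactly what the paper avoids.

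Nor is a $\mathrm{poly}(t)/d$ bound on the squared weight simply \emph{known}. What is known is the $\Omega(\sqrt d)$ bounded-error lower bound for inverting an in-place permutation (Fefferman--Kimmel). The missing step is turning that into a bound at every intermediate $t$. The paper does this in two moves:
\begin{enumerate}
\item Re-randomize: query $P_\pi P_\tau$ for a known uniform $\tau$ fixing $0$, so the average-case success probability $p$ applies to every $\pi$.
\item Repeat: if $p=\omega(t/\sqrt d)$, then $o(\sqrt d/t)$ independent runs, each with candidates checked by one forward query, would invert $\pi$ with $o(\sqrt d)$ queries, a contradiction.
\end{enumerate}
This gives $\mathbb{E}_\pi\|(\bra{\pi^{-1}(1)}\otimes I)\ket{\chi_\pi}\|^2=O(\sqrt{(t^2+1)/d})$. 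By Cauchy--Schwarz, your amplitudes then satisfy $\mathbb{E}_\pi|a_t|,\ \mathbb{E}_\pi|b_t|=O(\sqrt{t+1}\,d^{-1/4})$. The NO world needs no separate argument. Each $V$-query is a $P_\pi$-query followed by the known, $\pi$-independent $R_\varepsilon$, so that execution is itself a forward-query algorithm for $P_\pi$.

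The quantitative conclusion also does not follow as written. Take your guessed $\mathbb{E}_\pi|a_t|=O(t^2/\sqrt d)$ with $T=c\varepsilon^{-2}$. The error term $O(\varepsilon T^3/\sqrt d)$ is then $O(c^3\varepsilon^{-5}d^{-1/2})$. At $d=\varepsilon^{-8}$ this is only $O(c^3\varepsilon^{-1})$, which is unbounded as $\varepsilon\to 0$, not $o(1)$. That rate would prove the result only for $d=\Omega(\varepsilon^{-10})$.

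The stated threshold comes from the paper's rate. Its error term is $O(\varepsilon T^{3/2}d^{-1/4})=O(c^{3/2}\varepsilon^{-2}d^{-1/4})$, which is $O(c^{3/2})$ exactly when $d\ge\varepsilon^{-8}$. Then $D_T^2\le 2c+O(c^{3/2})$ is small for small $c$. Convexity of trace distance and $\tfrac12\|\rho_U-\rho_V\|_1\le\mathbb{E}_\pi\|\ket{\psi_T^\pi}-\ket{\phi_T^\pi}\|\le D_T$ finish the proof.
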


This naturally leaves the question of whether a Grover-type speed-up applies for any natural weakening of this problem. Towards this, we formulate a continuous-time version of the problem as follows.

\begin{problem}[Continuous-time State Distinguishing]
Let $H\in\mathbb{C}^{d\times d}$ for $d \ge 2$ be an unknown Hamiltonian with $\norm{H}\leq 1$, given through forward-only query access to $e^{-iHt}$ at cost $t$ per application, promised that exactly one of the following holds:
\begin{align*}
    \textnormal{\textbf{YES:}} \qquad & H\ket{0}=0,&
    \textnormal{\textbf{NO:}} \qquad & H\ket{0} = \varepsilon \ket{1}.
\end{align*}
The goal is to determine which case holds, given knowledge of $\varepsilon \in (0,1]$.
\label{prob:cont-sd}
\end{problem}

Note that this is not precisely the same as providing Hamiltonian query access to the unitary $U$ from \cref{prob:disc-sd}; in particular, $H$ satisfying the promise of \cref{prob:cont-sd} does not imply that $e^{-iH}$ satisfies the promise of \cref{prob:disc-sd}. However, such a problem would be unnatural in the Hamiltonian picture, as it corresponds to a promise about a column of a matrix exponential of $H$, rather than a column of $H$. In contrast, our formulation captures a natural analog in which we are promised structure on the first column of $H$ itself, mirroring the promise on the stucture of the first column of $U$ in \cref{prob:disc-sd}. We therefore view this version as a natural continuous analog of the discrete state distinguishing problem.

Unrestricted evolution under \(H\) leads to a loss of coherence in the non-$\ket{0}$ subspace, and therefore the time to distinguish the states can scale quadratically. Nonetheless, inverse evolution is unnecessary: by interspersing short forward evolutions with projections $\Pi$ onto \(\mathrm{span}\{\ket{0},\ket{1}\}\), one obtains quantum Zeno dynamics generated by the projected Hamiltonian \(\Pi H\Pi\)~\cite{MisraSudarshan77,ExnerIchinose05,FacchiPascazio08}. The projections can be implemented through projective measurements, reducing the problem to a two-dimensional evolution and yielding an \(O(\varepsilon^{-1})\)-time distinguishing algorithm using only forward evolution. This demonstrates a qualitative separation between forward-only access to a fixed unitary and forward-only access to continuous-time evolution, through the following theorem, proven in \cref{sec:algorithm,sec:analysis}.

\begin{theorem}
    \Cref{prob:cont-sd} requires $\Theta(\varepsilon^{-1})$ total evolution time.
    \label{thm:main-upper}
\end{theorem}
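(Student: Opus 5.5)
We will prove the two directions of \cref{thm:main-upper} separately. The upper bound is a Zeno-type algorithm. The lower bound is a standard continuous-time hybrid argument.

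\textbf{Upper bound.} Let $\Pi=\ket{0}\!\bra{0}+\ket{1}\!\bra{1}$. Start in $\ket{0}$, choose a total time $T=\Theta(\varepsilon^{-1})$, and split it into $N$ steps of length $\tau=T/N$. Each step applies $e^{-iH\tau}$ and then performs the projective measurement $\{\Pi,I-\Pi\}$; the algorithm declares \textbf{NO} if the outcome $I-\Pi$ ever occurs. The argument proceeds in four steps.
\begin{enumerate}
\item Expand $\Pi e^{-iH\tau}\Pi=\Pi-i\tau\,\Pi H\Pi-\tfrac{\tau^2}{2}\Pi H^2\Pi+O(\tau^3)$ and compare it with $e^{-i\tau \Pi H\Pi}\Pi$. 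Using $\norm{H}\le 1$, the two differ by at most $\tau^2\norm{(I-\Pi)H\Pi}^2/2+O(\tau^3)=O(\tau^2)$ in operator norm.
\item A telescoping argument then gives $\norm{(\Pi e^{-iH\tau}\Pi)^N-e^{-iT\Pi H\Pi}\Pi}=O(N\tau^2)=O(T^2/N)$. This is $o(1)$ once $N\gg T^2=\Theta(\varepsilon^{-2})$. The number of steps is large, but the cost counts only evolution time, which is $T$, so this is allowed.
\item Identify the effective two-dimensional Hamiltonian. In the YES case $H\ket{0}=0$ and $H$ is Hermitian, so $\Pi H\Pi=h\ket{1}\!\bra{1}$ for some real $h$. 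Hence $e^{-iT\Pi H\Pi}\ket{0}=\ket{0}$ exactly, and the final state is $\approx\ket{0}$. In the NO case $\bra{1}H\ket{0}=\varepsilon$ and $\bra{0}H\ket{0}=0$, so $\Pi H\Pi=\begin{pmatrix}0&\varepsilon\\ \varepsilon & h\end{pmatrix}$ with $h=\bra{1}H\ket{1}$ unknown.
\item Handle the unknown $h$. The survival amplitude of $\ket{0}$ under this $2\times2$ Hamiltonian is $|\langle 0|e^{-iT\Pi H\Pi}|0\rangle|^2=1-\frac{4\varepsilon^2}{4\varepsilon^2+h^2}\sin^2\!\big(\tfrac{T}{2}\sqrt{h^2+4\varepsilon^2}\big)$. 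It could stay close to $1$ for a detuning $h\gg\varepsilon$, or for an unlucky choice of $T$. I expect this to be the main obstacle.
\end{enumerate}

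To remove the detuning, I would first try symmetrizing. With the fixed known unitary $Z_1=I-2\ket{1}\!\bra{1}$, interleave steps of $e^{-iH\tau}$ with $Z_1e^{-iH\tau}Z_1=e^{-iZ_1HZ_1\tau}$. Both steps are forward evolutions. The effective projected Hamiltonian becomes the average of $\Pi H\Pi$ and $Z_1\Pi H\Pi Z_1$. Conjugation by $Z_1$ flips the sign of the off-diagonal term and leaves the diagonal term $h$ unchanged, so this kills the coupling $\varepsilon$ and keeps $h$, which is the wrong way round. The symmetrization should instead use a sign flip on $\ket{0}$, conjugating by $Z_0$. Alternatively, I would accept the coupling as it is and analyze the $2\times 2$ dynamics directly. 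In the worst case I would fall back to a randomized $T\in[0,c\varepsilon^{-1}]$ and argue that the averaged transfer probability is $\Omega(1)$ whenever $|h|=O(\varepsilon)$. A large $|h|$ must then be handled separately, for instance by first estimating $h$ to precision $\varepsilon$ and then echo-cancelling the $\ket{1}$ phase, using Zeno-projected evolution with phase kicks $e^{i\theta\ket{1}\bra{1}}$ interleaved so that the rotating-frame coupling stays resonant. Once some variant gives a constant gap between the YES and NO acceptance probabilities at $T=O(\varepsilon^{-1})$, a constant number of repetitions gives bounded error.

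\textbf{Lower bound.} Take the YES Hamiltonian $H_0=0$ and the NO Hamiltonian $H_1=\varepsilon(\ket{1}\!\bra{0}+\ket{0}\!\bra{1})$. Any algorithm interleaves evolutions under $H_b$ for times $t_j$ with fixed unitaries. Replacing one evolution at a time changes the final state by at most $\norm{e^{-iH_1t_j}-e^{-iH_0t_j}}\le\varepsilon t_j$. Summing over $j$, the two final states differ by at most $\varepsilon T$. Constant distinguishing advantage therefore needs $T=\Omega(\varepsilon^{-1})$, and this holds even with control and ancillas.
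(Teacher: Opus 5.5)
Your lower bound and the Zeno reduction in Steps 1--3 agree with the paper. The upper bound, however, stops exactly where you flag the obstacle. You never give a distinguisher, built from the projected dynamics, that works for every allowed detuning $h=\bra{1}H\ket{1}$; the promise only restricts this to $|h|\le 1$.

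\begin{itemize}
\item For $|h|\gg\varepsilon$, your Rabi formula caps the transfer probability at $4\varepsilon^2/h^2$ for every $T$. So measuring $\ket{0}$ versus $\ket{1}$ after the evolution fails.
\item Randomizing $T$ only covers the case $|h|=O(\varepsilon)$.
\item The concrete fix you settle on does not work. On $\mathrm{span}\{\ket{0},\ket{1}\}$ the operators $Z_0$ and $Z_1$ are $\mathrm{diag}(-1,1)$ and $\mathrm{diag}(1,-1)$, which differ only by an overall sign. Conjugating $\Pi H\Pi$ by $Z_0$ is therefore identical to conjugating by $Z_1$, and averaging again gives $\mathrm{diag}(0,h)$: it kills $\varepsilon$ and keeps $h$.
\item The estimate-$h$-then-echo plan is only an outline. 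It would need a forward-only, Heisenberg-limited estimate of $h$ to precision $c\varepsilon$ in time $O(\varepsilon^{-1})$. That estimate would also have to remain valid in the NO case, where $\ket{1}$ is not an eigenvector of $\Pi H\Pi$. None of this is supplied.
\end{itemize}

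The paper closes this step differently. It uses the Zeno reduction only to simulate the fixed $2\times 2$ unitary $W=e^{-i\Pi H\Pi}$ at unit time, where the norm promise controls the detuning. Write $\Omega=\sqrt{h^2/4+\varepsilon^2}$; the eigenvalues $h/2\pm\Omega$ of $\Pi H\Pi$ lie in $[-1,1]$, so $\Omega\le 1$. Hence in the NO case $|\bra{1}W\ket{0}|=\varepsilon\sin\Omega/\Omega\ge\varepsilon\sin 1$, while in the YES case this entry is $0$. Heisenberg-limited forward-only tomography of a qubit unitary~\cite{HaahEtAl23} then resolves this entry to precision $c\varepsilon$ using $O(\varepsilon^{-1})$ uses of $W$, i.e.\ total time $O(\varepsilon^{-1})$.

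Alternatively, your symmetrization idea does work if you conjugate by the swap $X_{01}$ of $\ket{0}$ and $\ket{1}$, which commutes with $\Pi$, rather than by a sign flip. Alternate $e^{-iH\tau}$ with $X_{01}e^{-iH\tau}X_{01}$ between projections. The effective generator is then $\tfrac{1}{2}(\Pi H\Pi+X_{01}\Pi H\Pi X_{01})$. This equals $\tfrac{h}{2}\Pi+\varepsilon(\ket{0}\bra{1}+\ket{1}\bra{0})$ in the NO case and $\tfrac{h}{2}\Pi$ in the YES case. At $T=\pi/(2\varepsilon)$, $\ket{0}$ is mapped to $\ket{1}$ in the NO case and left fixed in the YES case, up to phase. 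The error is controlled by your Steps 1--2 together with a first-order Trotter bound.
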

\subsection{Related Work}
\label{sec:related-work}

\paragraph{State distinguishing and inverse queries.}
Quantum state distinguishing, also called state discrimination, asks which of several known states has been supplied~\cite{Chefles00}. For two pure states with squared overlap $1-\varepsilon^2$, bounded-error discrimination requires $\Theta(\varepsilon^{-2})$ copies, whereas access to a preparation unitary and its inverse permits $O(\varepsilon^{-1})$ queries through amplitude amplification~\cite{Grover96,BHMT02}. Tang and Wright~\cite{TangWright25} show that generic amplitude amplification and estimation lose this quadratic advantage with forward-only access, using trace-estimation instances and compressed oracles.

\paragraph{Unitary discrimination.}
Ac{\'i}n~\cite{Acin01} and Duan, Feng, and Ying~\cite{DuanFengYing07} obtain perfect discrimination through repeated uses of fully specified unitary alternatives, whereas our promise specifies only one column. Forward-only access also permits unitary tomography to diamond-norm error $\varepsilon$ using $O(d^2/\varepsilon)$ queries~\cite{HaahEtAl23}, and universal unitary inversion~\cite{QuintinoEtAl19,ChenEtAl24}, including exact deterministic inversion with $O(d^2)$ queries \cite{ChenEtAl24}. Both unitary tomography \cite{HaahEtAl23} and unitary inversion \cite{odake,CYZ26} necessarily incur an $\Omega(d^2)$ overhead, making them unappealing in our high-dimensional regime.

\paragraph{Related lower bounds.}
Wang and Zhang~\cite{WangZhang25} relate sample and query complexity for state testing with block-encoding access. Tang, Wright, and Zhandry~\cite{TangWrightZhandry25} establish a corresponding simulation for state-preparation access, showing that forward and inverse queries offer at most a quadratic advantage over copies for decision problems. Our lower bound removes this advantage for the present promise when inverse queries are unavailable. Holman, Ramachandran, and Yirka~\cite{HolmanEtAl25} establish $\Theta(\sqrt{d})$ query complexity for permutation inversion with forward-only in-place access; this dimension-dependent hardness is useful in our construction. In a different learning setting, Cotler, Schuster, and Mohseni~\cite{CotlerEtAl23} obtain exponential separations between time-ordered and out-of-time-order correlators, likewise illustrating the value of reversed dynamics.

\paragraph{Quantum Zeno dynamics and learning.}
Our continuous-time algorithm uses quantum Zeno dynamics: repeated projections inhibit transitions~\cite{MisraSudarshan77} while permitting evolution within the projected subspace~\cite{FacchiPascazio02,ExnerIchinose05,FacchiPascazio08}. Related implementations use frequent pulses or strong couplings, connecting Zeno dynamics to dynamical decoupling~\cite{FacchiLidarPascazio04}. We isolate a known two-dimensional subspace that retains the distinguishing matrix element, charging total evolution time under the unknown Hamiltonian rather than the number or physical duration of interventions.

Franceschetto et al.~\cite{FranceschettoEtAl25} use a closely related confinement idea for Hamiltonian learning: frequent unitary kicks isolate small patches of a geometrically local Hamiltonian, whose interactions are reconstructed through process tomography. They provide finite-resource guarantees and a 109-qubit demonstration. Both approaches suppress unwanted couplings, but we use projective measurements and a promise on $H\ket{0}$ to obtain dimension-independent distinguishing without locality or sparsity assumptions. Their work does not address our discrete lower bound or the separation between oracle models.

\paragraph{Hamiltonian learning and testing.} 
Oracle learning in the Hamiltonian model has been studied in a number of other settings. Heisenberg-limited learning with forward evolution is known for local Hamiltonians through dynamical reshaping~\cite{HuangEtAl23}, and for sparse Hamiltonians without prescribed interaction structure~\cite{HuEtAl25}. Dutkiewicz, O'Brien, and Schuster~\cite{DutkiewiczEtAl24} establish Heisenberg-limited learning with control and $\Omega(\varepsilon^{-2})$ lower bounds without control for broad Hamiltonian classes. As is standard in this model, these results only allow forward time evolution under the Hamiltonian.

\paragraph{Continuous-time and discrete-time access.}
Continuous-time query algorithms were first proposed by Farhi and Gutmann~\cite{FarhiGutmann98}. Cleve et al.~\cite{CleveEtAl09} simulate continuous-time queries to a classical oracle with nearly the same number of discrete queries for structured input Hamiltonians and associated self-inverse phase oracles; however, they do not provide a simulation for arbitrary forward-only unitaries. In the intermediate case where there is a prescribed minimum evolution duration, Shin, Lee, and Oh~\cite{ShinEtAl26} obtain Heisenberg-limited learning under sparsity assumptions.

\subsection{Future Directions}

Our work leaves a number of future directions open. One direction is to improve the dimension dependence of \cref{thm:main-lower}, as the best known upper bound of $O(d^{3/2}\varepsilon^{-1})$ following from \cite{chen2025inversefreequantumstateestimation} leaves a significant gap. Alternatively, one could search for algorithms achieving $O(\varepsilon^{-1})$ scaling with milder dimension dependence, or achieving better tradeoffs in joint scaling with $\varepsilon$ and $d$.

Our algorithm establishing \cref{thm:main-upper} utilizes short evolution times interspersed with control operations. Prior work has studied an intermediate case where there is a prescribed minimum evolution duration, such as Shin, Lee, and Oh~\cite{ShinEtAl26} obtaining $\tilde O(\varepsilon^{-1})$ learning under sparsity assumptions. It would be interesting to determine how this would affect our algorithm, and in particular how long one can take the intermediate evolution times while still achieving $O(\varepsilon^{-1})$ scaling.

\section{The Algorithm}
\label{sec:algorithm}

In this section, we describe the algorithm solving \Cref{prob:cont-sd} with $O(\varepsilon^{-1})$ evolution time, establishing \Cref{thm:main-upper}. As a starting point, consider the following instance in two dimensions:
\begin{align}
    H_\mathrm{NO}&=\begin{pmatrix}
    0 & \varepsilon \\
    \varepsilon & 0
\end{pmatrix}, & H_\mathrm{YES}& =\begin{pmatrix}
    0 & 0 \\
    0 & 0
\end{pmatrix}.
\end{align}

Firstly, this example gives a lower bound of
$\Omega(\varepsilon^{-1})$ evolution time, since
\[
\norm{e^{-iH_\mathrm{NO}s}-e^{-iH_\mathrm{YES}s}} = \norm{e^{-iH_\mathrm{NO}s}- I}
\leq s\varepsilon
\]
for every $s\geq 0$. In particular, let $\ket{\psi^t_{\mathrm{NO}}}$ and $\ket{\psi^t_{\mathrm{YES}}}$ denote an algorithm's state after total evolution time $t$ under the two respective oracles. We may purify the algorithm, retaining all measurement outcomes in auxiliary registers. Oracle-independent unitaries preserve the distance between these states, while an oracle evolution of duration $s$ increases their distance by at most $s\varepsilon$, by the triangle inequality:
\begin{align*}
&\norm{
e^{-iH_\mathrm{NO}s}\ket{\psi^t_{\mathrm{NO}}}
-
e^{-iH_\mathrm{YES}s}\ket{\psi^t_{\mathrm{YES}}}
}\\
&\qquad\leq
\norm{\ket{\psi^t_{\mathrm{NO}}}-\ket{\psi^t_{\mathrm{YES}}}}
+
\norm{e^{-iH_\mathrm{NO}s}-e^{-iH_\mathrm{YES}s}}\\
&\qquad\leq
\norm{\ket{\psi^t_{\mathrm{NO}}}-\ket{\psi^t_{\mathrm{YES}}}}
+s\varepsilon.
\end{align*}
Here the evolutions act as the identity on auxiliary registers. Summing over all oracle evolutions therefore gives
\[
\norm{\ket{\psi^t_{\mathrm{NO}}}-\ket{\psi^t_{\mathrm{YES}}}}
\leq t\varepsilon.
\]
Since distinguishing with constant advantage requires constant
distance between the final states, any such algorithm requires
$t=\Omega(\varepsilon^{-1})$, establishing the lower bound.

This example will also illustrate how $O(\varepsilon^{-1})$ scaling is achievable.

\subsection{The Naive Algorithm}
\label{subsec:alg-naive}
A naive approach to state distinguishing using forward-only queries is to apply the operator $e^{-iHt}$ on an appropriate input state for a time $t=O(\varepsilon^{-1})$ and then measure. For the simple example instance $\{H_\mathrm{YES}, H_\mathrm{NO}\}$ described above, this algorithm works (and is in fact exactly optimal~\cite{Childs_2000}).

\begin{claim}
\label{claim:naive2d}
For the simple $\{H_\mathrm{YES}, H_\mathrm{NO}\}$ described above, the naive algorithm correctly distinguishes 
$H|0\rangle = 0$ from $H|0\rangle = \varepsilon|1\rangle$ using 
total evolution time $O(\varepsilon^{-1})$.
\end{claim}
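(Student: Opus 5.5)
The plan is to compute the evolution explicitly in the two-dimensional example, choose the evolution time to produce a full rotation, and then measure in the computational basis. Take the input state $\ket{0}$. In the YES case $H_\mathrm{YES}=0$, so $e^{-iH_\mathrm{YES}t}\ket{0}=\ket{0}$ for every $t$. In the NO case, write $H_\mathrm{NO}=\varepsilon X$, where $X$ is the Pauli-$X$ matrix. Since $X^2=I$, the exponential series collapses to
\[
e^{-iH_\mathrm{NO}t}=\cos(\varepsilon t)\,I - i\sin(\varepsilon t)\,X ,
\]
and therefore
\[
e^{-iH_\mathrm{NO}t}\ket{0}=\cos(\varepsilon t)\ket{0}-i\sin(\varepsilon t)\ket{1}.
\]

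Next, choose $t=\pi/(2\varepsilon)$. With this choice the NO-case output is exactly $-i\ket{1}$, while the YES-case output is $\ket{0}$. The two final states are orthogonal, so a single computational-basis measurement tells them apart with certainty. The algorithm outputs YES on outcome $0$ and NO on outcome $1$. The total evolution time is $\pi/(2\varepsilon)=O(\varepsilon^{-1})$.

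The procedure also respects the access model. It uses one forward application of $e^{-iHt}$ with $t\ge 0$, the cost is charged as $t$, and no inverse is needed. Note too that $\norm{H_\mathrm{NO}}=\varepsilon\le 1$, as the promise requires.

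There is essentially no obstacle here; the only point needing care is the closed form of the exponential, which follows from $X^2=I$. One might add that if $\varepsilon$ were only approximately known, any $t$ with $\sin^2(\varepsilon t)\ge 2/3$ still gives bounded error. That remark is not needed, since the problem states that $\varepsilon$ is known.
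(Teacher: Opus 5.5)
Your proposal is correct and follows the paper's proof essentially step for step. The YES state stays at $\ket{0}$, the NO state evolves to $\cos(\varepsilon t)\ket{0}-i\sin(\varepsilon t)\ket{1}$, and at $t=\pi/(2\varepsilon)$ the two outputs are orthogonal and so perfectly distinguishable; your use of $H_\mathrm{NO}=\varepsilon X$ with $X^2=I$ just justifies the closed form that the paper states directly.
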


\begin{proof}
In the YES case, $H_\mathrm{YES}\ket{0} = 0$, so the state remains in $\ket{0}$ over the whole evolution. In the NO case, $H\ket{0} = \varepsilon\ket{1}$, so the state rotates toward $\ket{1}$ at rate $\varepsilon$. In particular, we have \begin{align}
    e^{-iH_{\mathrm{NO}}t} \ket{0} &= \cos(\varepsilon t)\ket{0} - i\sin(\varepsilon t)\ket{1}.
\end{align}
After time $t =\frac{\pi}{2\varepsilon} =  O(\varepsilon^{-1})$ the two cases produce orthogonal states, which are perfectly distinguishable by a single measurement.
\end{proof}

Generalizations of this simple algorithm, such as unitary channel learning~\cite{HaahEtAl23}, solve the $2$-dimensional case in full generality. However, in the $d$-dimensional case such general algorithms scale with the dimension. If we apply the naive approach described, when $e^{-iHt}$ is applied to $\ket{0}$ the evolution is no longer confined to the plane spanned by $\ket{0}$ and $\ket{1}$.
The small difference $\varepsilon$ gets mixed with the components that have traveled outside the 2D subspace. This leads to dynamics in the NO case that need not coherently travel away from $\ket{0}$, instead meandering throughout the large number of dimensions. The distinguishing contributions coudl point in different directions and fail to reinforce one another, potentially leading to $O(\varepsilon^{-2})$ scaling. We will instead prevent the state from leaking into higher dimensions using the quantum Zeno effect.

\subsection{The General Algorithm}
\label{subsec:alg-general}

Let $\Pi$ denote the projector onto $\mathrm{span}\{|0\rangle, |1\rangle\}$ and let $\mathcal{M} = \{\Pi, \Pi^\perp\}$ be the corresponding projective measurement. We use repeated measurements to constrain the evolution to two dimensions, at which point we can apply the algorithm described in \Cref{subsec:alg-naive}.

\vspace{\baselineskip}

\begin{algorithm}[H]
\label{alg:main}
    \caption{Continuous-time state distinguishing}
    \begin{algorithmic}[1]
        \Procedure{Reduction to 2D}{}\newline
        \textbf{Input:} Time evolution oracle for $H$, target time $t$, state $\ket{\psi} \in \Pi$.\newline
        \textbf{Output:} A state arbitrarily close to $e^{-i\Pi H \Pi t} \ket{\psi}$.
        \State{\textbf{Choose} $n \in \mathbb N$ based on accuracy.}
        \State{\textbf{For} $j$ in $1,2,\dots, n$:}
        \State{\qquad \textbf{Apply} $e^{-iHt/n}$ to $\ket{\psi}$}
        \State{\qquad \textbf{Measure} $\ket{\psi}$ under $\mathcal M$.}
        \State {\textbf{Return} $\ket{\psi}$}
        \EndProcedure
    \end{algorithmic}
    \label{alg_1}
\end{algorithm}

\vspace{\baselineskip}

Algorithm~\ref{alg:main} converts forward access to the evolution oracle for $H$ into access to evolution under the projected Hamiltonian $\Pi H \Pi$, which acts nontrivially only on the 2D subspace. Using this reduction, we can distinguish the states by running the 2D learning algorithm discussed in \Cref{subsec:alg-naive} on the simulated evolution, and determining whether $\Pi H \Pi \ket{0} = 0$ or $\Pi H \Pi \ket{0} = \varepsilon\ket{1}$.

\section{Analysis}
\label{sec:analysis}

The goal of this section is to prove that Algorithm~\ref{alg:main} 
converges to ideal evolution under the projected Hamiltonian 
$\Pi H \Pi$. We first show that replacing the measurements with projections onto $\Pi$ is justified via the gentle measurement lemma, then show the convergence of the resulting projected evolution.

\subsection{Gentle Measurement}

Throughout we use $\Pi$ in two related ways: it denotes the orthogonal projector onto $\mathrm{span}\{\ket{0},\ket{1}\}$, and also the subspace $\mathrm{span}\{\ket{0},\ket{1}\}$ itself. Thus, $\ket{\psi}\in\Pi$ means that $\ket{\psi}$ lies in $\mathrm{span}\{\ket{0},\ket{1}\}$, or equivalently, $\Pi\ket{\psi}=\ket{\psi}$. The projector onto the orthogonal complement of this subspace is $\Pi^\perp=I-\Pi.$

\begin{lemma}[Gentle Measurement]
\label{lem:gentle}
Let $\ket{\psi} \in \Pi$ be a state, and let 
$\mathcal{M} = \{\Pi, \Pi^\perp\}$ be the projective measurement onto 
$\Pi$ and its complement. After applying $e^{-iHt/n}$ to $\ket{\psi}$, 
the probability that measurement under $\mathcal{M}$ returns 
$\Pi^\perp$ is at most $\|H\|^2t^2/n^2$.
\end{lemma}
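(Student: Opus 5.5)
The plan is to bound the leaked amplitude directly. The probability of outcome $\Pi^\perp$ is
\[
p=\norm{\Pi^\perp e^{-iHt/n}\ket{\psi}}^2 .
\]
Since $\ket{\psi}\in\Pi$, we have $\Pi^\perp\ket{\psi}=0$. We may therefore subtract the identity inside the norm:
\[
\Pi^\perp e^{-iHt/n}\ket{\psi}=\Pi^\perp\bigl(e^{-iHt/n}-I\bigr)\ket{\psi}.
\]
Because $\Pi^\perp$ is an orthogonal projector, $\norm{\Pi^\perp}\le 1$. It follows that
\[
\sqrt{p}\le \norm{e^{-iHt/n}-I}\,\norm{\ket{\psi}}=\norm{e^{-iHt/n}-I}.
\]

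The key step is the operator bound $\norm{e^{-iHs}-I}\le \norm{H}s$ for $s=t/n\ge 0$. This is the same estimate used in the lower-bound argument of \cref{sec:algorithm}. To justify it, I would write
\[
e^{-iHs}-I=-i\int_0^s H e^{-iHr}\,dr .
\]
Unitarity of $e^{-iHr}$ then bounds the norm of the integrand by $\norm{H}$, which gives the claim. Alternatively, since $H$ is Hermitian one can diagonalize it and use $|e^{-i\lambda s}-1|=2|\sin(\lambda s/2)|\le|\lambda| s$ for each eigenvalue $\lambda$.

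Squaring the bound on $\sqrt{p}$ then yields $p\le \norm{H}^2t^2/n^2$, which is the statement of \cref{lem:gentle}.

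The proof is short, and I do not expect a real obstacle. The one point needing care is using $\Pi^\perp\ket{\psi}=0$ to insert $-I$, rather than expanding to second order. This gives the clean bound $\norm{H}^2t^2/n^2$ with no remainder term. A Taylor expansion would instead produce $O(t^2/n^2)$ with an unspecified constant.
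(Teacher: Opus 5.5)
Your proposal is correct and follows the paper's proof: you use $\Pi^\perp\ket{\psi}=0$ to insert $-I$, then $\norm{\Pi^\perp}\le 1$ and $\norm{e^{-iHt/n}-I}\le\norm{H}t/n$, and finally square. The only difference is that you justify the operator bound $\norm{e^{-iHs}-I}\le\norm{H}s$, via the integral representation or the spectral decomposition, whereas the paper simply asserts it.
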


\begin{proof}
Since $\ket{\psi} \in \Pi$, we have $\Pi^\perp \ket{\psi} = 0$. 
The probability of measuring $\Pi^\perp$ after applying $e^{-iHt/n}$ is
$$
\left\| \Pi^\perp e^{-iHt/n} \ket{\psi} \right\|^2 
= \left\| \Pi^\perp \left( e^{-iHt/n} - I \right) \ket{\psi} \right\|^2,
$$
where the equality uses $\Pi^\perp \ket{\psi} = 0$ to subtract 
$\Pi^\perp I \ket{\psi} = 0$. Since $\Pi^\perp$ is a projector, 
$\|\Pi^\perp\| \le 1$, so
$$
\left\| \Pi^\perp \left( e^{-iHt/n} - I \right) \ket{\psi} \right\| 
\le \left\| e^{-iHt/n} - I \right\| \leq \left\|H\right\|t/n.
$$
Squaring yields the bound $\|H\|^2t^2/n^2$.
\end{proof}

With this result in place, we can establish that the evolution converges to evolution governed by the projected Hamiltonian.

\subsection{Convergence to Projected Evolution}

By Lemma~\ref{lem:gentle}, a single measurement returns $\Pi^\perp$ with
probability at most $\|H\|^2t^2/n^2$. A run performs $n$ such measurements,
so the total probability of at least one failure (returning $\Pi^\perp$) is upper bounded by
$$
\Pr[\text{any }\Pi^\perp] \leq n \cdot \frac{\|H\|^2t^2}{n^2} \;=\; \frac{\|H\|^2t^2}{n},
$$
which disappears as $n \to \infty$. The algorithm's action on $\ket{\psi}$ is therefore approximated by
$\left( \Pi\, e^{-iHt/n} \right)^n$. We can now show that this is equivalent to evolution under the projected Hamiltonian $\Pi H \Pi$.

\begin{theorem}
\label{thm:main}
For $\ket{\psi} \in \Pi$,
$$
\lim_{n \to \infty} \left( \, e^{-iHt/n} \Pi\right)^n \ket{\psi} 
= e^{-i \Pi H \Pi t} \ket{\psi}.
$$
\end{theorem}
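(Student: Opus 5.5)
Since $\ket{\psi}\in\Pi$, we have $\Pi\ket{\psi}=\ket{\psi}$, so the left-hand side equals $\left(\Pi e^{-iHt/n}\right)^{n}\ket{\psi}$ up to relabeling; more precisely $(e^{-iHt/n}\Pi)^n\ket{\psi}=e^{-iHt/n}\left(\Pi e^{-iHt/n}\right)^{n-1}\ket{\psi}$. Because the leading factor $e^{-iHt/n}\to I$ in norm as $n\to\infty$, it suffices to show that $\left(\Pi e^{-iHt/n}\Pi\right)^{m}\ket{\psi}\to e^{-i\Pi H\Pi t}\ket{\psi}$ with $m=n-1$ or $m=n$. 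Here we insert $\Pi$ on the right freely, since every intermediate vector already lies in $\Pi$. Replacing $m$ by $n-1$ only changes the target by $e^{-i\Pi H\Pi t/n}$, which also tends to $I$.

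The plan is a Chernoff-type product estimate. We compare $A_n:=\Pi e^{-iHt/n}\Pi$ with $B_n:=e^{-i\Pi H\Pi t/n}$, both viewed as operators on the range of $\Pi$. Expanding the exponential to first order gives
\[
A_n=\Pi-\tfrac{it}{n}\Pi H\Pi+R_n,\qquad \|R_n\|\le \tfrac{\|H\|^2t^2}{2n^2},
\]
using $\|e^{-iX}-I+iX\|\le\|X\|^2/2$. Likewise $B_n\Pi=\Pi-\tfrac{it}{n}\Pi H\Pi+R_n'$ with $\|R_n'\|\le \|\Pi H\Pi\|^2t^2/(2n^2)\le \|H\|^2t^2/(2n^2)$, since $\Pi H\Pi$ commutes with $\Pi$. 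Hence $\|A_n-B_n\|_{\Pi}\le \|H\|^2t^2/n^2$.

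Next we telescope:
\[
A_n^n-B_n^n=\sum_{k=0}^{n-1}A_n^{k}(A_n-B_n)B_n^{\,n-1-k}.
\]
The key point is that both factors are contractions. $B_n$ is unitary on $\Pi$, and $\|A_n\|\le\|\Pi\|\,\|e^{-iHt/n}\|\,\|\Pi\|\le1$. Therefore
\[
\|A_n^n-B_n^n\|\le n\cdot \|H\|^2t^2/n^2=\|H\|^2t^2/n\to0,
\]
which together with $B_n^n=e^{-i\Pi H\Pi t}$ on $\Pi$ gives the limit.

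The only step that needs care is the bookkeeping between $(e^{-iHt/n}\Pi)^n$ and the sandwiched form $(\Pi e^{-iHt/n}\Pi)^{n}$: the extra unprojected factor $e^{-iHt/n}$ at the end, and the off-by-one in the exponent. We handle this by the contraction bound and $\|e^{-iHt/n}-I\|\le\|H\|t/n$, as in Lemma~\ref{lem:gentle}. This gives an explicit rate $O(\|H\|t/n+\|H\|^2t^2/n)$, which also quantifies the choice of $n$ in Algorithm~\ref{alg:main}.
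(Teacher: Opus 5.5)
Your proof is correct and follows the paper's argument essentially step for step. You use the same rewriting $(e^{-iHt/n}\Pi)^n\ket{\psi}=e^{-iHt/n}(\Pi e^{-iHt/n}\Pi)^{n-1}\ket{\psi}$, the same first-order comparison of $\Pi e^{-iHt/n}\Pi$ with $e^{-i\Pi H\Pi t/n}\Pi$, and the same accumulation of per-step errors using that both factors are contractions; the only difference is that you write out the telescoping identity and the constants explicitly, giving the rate $O(\|H\|t/n+\|H\|^2t^2/n)$.
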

\begin{proof}
Since $\Pi\ket{\psi}=\ket{\psi}$, we can write
\[
\left(e^{-iHt/n}\Pi\right)^n\ket{\psi}
=
e^{-iHt/n}\left(\Pi e^{-iHt/n}\Pi\right)^{n-1}\ket{\psi}.
\]
Expanding the matrix exponential gives
\[
\Pi e^{-iHt/n}\Pi
=
\Pi+\frac{-i\Pi H\Pi t}{n}
+O\left(\frac{\|H\|^2t^2}{n^2}\right)
=
e^{-i(\Pi H\Pi)t/n}\Pi
+O\left(\frac{\|H\|^2t^2}{n^2}\right).
\]
Both $\Pi e^{-iHt/n}\Pi$ and
$e^{-i(\Pi H\Pi)t/n}\Pi$ have operator norm at most one.
Thus, the per-step error accumulates to at most
$O(\|H\|^2t^2/n)$ over $n-1$ steps. Consequently,
\[
\left(e^{-iHt/n}\Pi\right)^n\ket{\psi}
=
e^{-iHt/n}
e^{-i(\Pi H\Pi)t(n-1)/n}\ket{\psi}
+O\left(\frac{\|H\|^2t^2}{n}\right).
\]
For fixed $H$ and $t$, the first exponential converges to the
identity and the second converges to $e^{-i(\Pi H\Pi)t}$.
Taking $n\to\infty$ proves the claim.
\end{proof}

Theorem~\ref{thm:main} shows that the error of a single run of Algorithm~\ref{alg:main} can be made arbitrarily small, which can be used to simulate evolution under $\Pi H \Pi$ for some target time $t$ to arbitrary accuracy with no overhead. We can plug this into unitary channel learning algorithms, such as \cite[Theorem 1.1]{HaahEtAl23}, to learn whether $H\ket{0}=0$ or not in $O(\varepsilon^{-1})$ queries. In particular, this result allows learning the $2 \times 2$ submatrix of $e^{-i\Pi H\Pi t}$ acting on $\mathrm{span}\{\ket{0},\ket{1}\}$ (up to a global phase) to entry-wise precision $O(\varepsilon)$ using $O(\varepsilon^{-1})$ queries, and $\bra{1}e^{-i\Pi H\Pi}\ket{0} = 0$ in the YES and $|\bra{1}e^{-i\Pi H\Pi}\ket{0}| = \Omega(\varepsilon)$ in the NO case, as a result of the norm promise. Thus, evolution for time $\Theta(\varepsilon^{-1})$ suffices to distinguish the two cases.

\section{A Discrete Lower Bound}
\label{sec:lower}

The results above concern the continuous-time model, where the evolution oracle can be applied for any duration. The discrete query model instead treats a fixed unitary as one operation that cannot be subdivided. Algorithm~\ref{alg:main} does not apply directly to this model because its analysis requires the step duration $1/n$ to approach zero as $\varepsilon$ goes to zero. We now show that $\Theta(\varepsilon^{-2})$ forward-only queries are required in the discrete query model, proving \Cref{thm:main-lower}.

\subsection{The Hard Instance}

We construct the hard instance using a random permutation. Let $\pi$ be a permutation of $\{0, 1, \dots,\allowbreak d-1\}$ with $\pi(0) = 0$, and
let $P_\pi$ be the corresponding permutation operator,
$P_\pi \ket{m} = \ket{\pi(m)}$. Let $R_\varepsilon$ be a small rotation within $\mathrm{span}\{\ket{0}, \ket{1}\}$, in particular \begin{align}
    R_\varepsilon &\coloneqq \begin{pmatrix}
        \sqrt{1-\varepsilon^2} & -\varepsilon \\
        \varepsilon & \sqrt{1-\varepsilon^2}
    \end{pmatrix} \oplus I. \nonumber
\end{align}
The task is to distinguish
$$
U = P_\pi \qquad \text{from} \qquad V = R_\varepsilon P_\pi,
$$
for $\pi$ drawn uniformly at random subject to $\pi(0) = 0$, given forward
query access to one of them.

The condition $\pi(0) = 0$ means that $U\ket{0}=\ket{0}$ and $V\ket{0}=R_\varepsilon\ket{0}=\sqrt{1-\varepsilon^2}\ket{0}+\varepsilon\ket{1}$.
Thus, $U$ satisfies the YES promise and $V$ the NO promise exactly. In two dimensions, repeated application of $R_\varepsilon$ to
$\ket{0}$ rotates the state by $\approx \varepsilon$ per query, so if $\pi$ is the identity, the contributions add linearly and $O(\varepsilon^{-1})$ queries suffice. But for general $\pi$, composing with $P_\pi$ changes this. The amplitude placed on $\ket{1}$ by one query is sent by the next application of $P_\pi$ to $\ket{\pi(1)}$, which lies outside $\mathrm{span}\{\ket{0}, \ket{1}\}$ for all but one choice of $\pi(1)$. Contributions from different queries therefore go to different basis states. Their squared magnitudes add, so the total difference grows like $\varepsilon\sqrt{T}$ after $T$ queries, requiring $T = \Omega(\varepsilon^{-2})$.
\subsection{The Hybrid Argument}
\label{sec:discrete-hybrid}

We first quantify how much amplitude an algorithm can place on $\ket{\pi^{-1}(1)}$. The bound below follows from the bounded-error lower bound for inverting an in-place permutation from~\cite{FeffermanKimmel18}.

\begin{lemma}[Amplitude on $\pi^{-1}(1)$]
\label{lem:hidden-preimage-amplitude}
Let $\ket{\chi_\pi}$ be a state produced using
at most $t$ forward queries to $P_\pi$, with arbitrary
$\pi$-independent operations and auxiliary registers. For any
designated $d$-dimensional register $r$ of this state with computational basis
$\{\ket{0}_r,\ldots,\ket{d-1}_r\}$,
\[
\mathbb E_\pi
\left\|\left({}_r\!\bra{\pi^{-1}(1)}\otimes I\right)
\ket{\chi_\pi}\right\|^2
\leq O\left(\sqrt{\frac{t^2+1}{d}}\right),
\]
where the expectation is over uniform $\pi$ such that $\pi(0)=0$.
\end{lemma}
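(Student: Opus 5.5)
We prove the lemma by reduction to the bounded-error lower bound for inverting an in-place permutation~\cite{FeffermanKimmel18} (see also~\cite{HolmanEtAl25}). Set
\[
p \coloneqq \mathbb E_\pi\left\|\left({}_r\!\bra{\pi^{-1}(1)}\otimes I\right)\ket{\chi_\pi}\right\|^2 .
\]
This is exactly the probability that measuring register $r$ of $\ket{\chi_\pi}$ in the computational basis yields $\pi^{-1}(1)$. Any state preparation with this property therefore gives a $t$-query forward-only inverter for the point $1$ with average success probability $p$. The goal is to show that a large $p$ would contradict the known $\Omega(\sqrt d)$ lower bound, and to extract the stated quantitative form $p=O(\sqrt{(t^2+1)/d})$.

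\textbf{Step 1: pass to the standard inversion task.} Restricting to $\pi(0)=0$ just gives a uniformly random permutation of the remaining $d-1$ symbols. Relabelling $\{1,\dots,d-1\}$ turns the task into inverting a uniform in-place permutation on $d-1$ elements at a fixed target. By symmetry, a fixed target is equivalent to a uniformly random one: conjugate the algorithm by a known relabelling. Fixing $0$ only changes $d$ to $d-1$, which is immaterial.

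\textbf{Step 2: convert small success into constant success.} \cite{FeffermanKimmel18} state their bound for constant success probability, so we boost. Given an inverter with success $p$, run it $k=O(1/p)$ times independently and check each candidate $x$ with a single forward query, testing whether $P_\pi\ket{x}=\ket{1}$. This yields an inverter with success probability at least $2/3$ using $O((t+1)/p)$ queries. The lower bound of $\Omega(\sqrt d)$ queries then gives $(t+1)/p=\Omega(\sqrt d)$, that is,
\[
p=O\!\left(\frac{t+1}{\sqrt d}\right)=O\!\left(\sqrt{\frac{t^2+1}{d}}\right).
\]
The checking query is forward-only, so the reduction stays within the model. When $p$ is already constant the bound is trivial, since the right-hand side is then at least a constant once $t\gtrsim\sqrt d$.

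\textbf{Main obstacle.} The delicate point is that the lemma concerns average-case success over $\pi$ and applies to an arbitrary register of a state that may be entangled with auxiliary registers. One must confirm that the cited lower bound holds for average-case success over random $\pi$, or derive the average-case version via the symmetrisation in Step 1, and that it allows arbitrary ancillas and intermediate $\pi$-independent unitaries.

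\textbf{Fallback.} If the cited bound is only for worst-case bounded error, I would first try a direct hybrid argument. Replace $P_\pi$ by $P_{\pi'}$, where $\pi'$ swaps the preimage $\pi^{-1}(1)$ with a fresh random point, and bound the total query magnitude on the preimage by $O(t^2/d)$ using the usual in-place inversion potential argument. Taking a square root via Cauchy--Schwarz would then give the $\sqrt{(t^2+1)/d}$ form directly.
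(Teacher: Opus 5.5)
Your overall route matches the paper's: measure register $r$, boost by repetition, and invoke the $\Omega(\sqrt d)$ in-place inversion bound of Fefferman and Kimmel. Your one-query verification of each candidate is a detail the paper leaves implicit.

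However, Step~2 fails as written. The quantity $p$ is only an \emph{average} over $\pi$, and running the same $t$-query procedure $k$ times against the \emph{same} $\pi$ does not amplify it. Let $p_\pi$ be the success probability for a fixed $\pi$. The repeated procedure then succeeds with probability $\mathbb E_\pi[1-(1-p_\pi)^k]$. By concavity this is at most $1-(1-p)^k$, and it can equal $p$ exactly when $p_\pi\in\{0,1\}$. This really occurs: a one-query algorithm that writes $\pi(2)$ into $r$ has $p_\pi=1$ exactly when $\pi(\pi(2))=1$, and $p_\pi=0$ otherwise. So you never reach success $2/3$ and cannot invoke the lower bound.

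Your ``Main obstacle'' paragraph misplaces the issue. Even an average-case version of the cited bound would not rescue plain repetition. The symmetrisation in Step~1, relabelling the target point, does not equalise success across different $\pi$.

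The missing idea, which the paper uses (``re-randomizing the permutation each time''), is a fresh random self-reduction in every repetition:
\begin{itemize}
\item Draw a uniform $\tau$ with $\tau(0)=0$.
\item Answer every query of that run with $P_\pi P_\tau=P_{\pi\tau}$. This costs one forward query to $P_\pi$ after a known permutation.
\item Map the measured value $y$ to $\tau(y)$, since $(\pi\tau)^{-1}(1)=\tau^{-1}(\pi^{-1}(1))$.
\end{itemize}
Since $\pi$ fixes $0$, $\pi\tau$ is uniform over permutations fixing $0$ for every fixed $\pi$. Hence each run succeeds with probability exactly $p$ for every $\pi$, independently across runs. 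With $k=O(1/p)$ runs plus your verification query, this gives a worst-case inverter using $O((t+1)/p)$ queries. So the worst-case form of the Fefferman--Kimmel bound already suffices, which also dissolves your average-case concern. Your conclusion $p=O((t+1)/\sqrt d)=O(\sqrt{(t^2+1)/d})$ then follows.
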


\begin{proof}
For the $t=0$ case, the bound is straightforwardly $O(1/d)$, which is subsumed by the $O(1/\sqrt{d})$ bound stated. Going forward, assume $t > 0$, so we can equivalently replace $t^2+1$ with $t^2$ inside the big $O$. Measuring the designated register of $\ket{\chi_\pi}$ gives the value $\pi^{-1}(1)$ with probability $\norm{\left(\bra{\pi^{-1}(1)}\otimes I\right)
\ket{\chi_\pi}}^2$.
Let $p$ be the average of this probability over $\pi$.
The in-place permutation adversary bound of
Fefferman and Kimmel~\cite[Lemma~2]{FeffermanKimmel18} implies that any algorithm finding $\pi^{-1}(1)$ with constant probability for every $\pi$ requires at least $\Omega(\sqrt{d})$ queries, by viewing $\pi$ as a uniform permutation on $\ket{1}, \dots, \ket{d-1}$. Permutation inversion is worst-case to average-case reducible by composing $\pi$ with a random permutation, so this bound applies in the average case as well.

If the adversary achieved a probability $p = \omega(\sqrt{{t^2}/{d}})$ of recovering $\pi^{-1}(1)$, then repeating for $o(\sqrt{{d}/{t^2}})$ repetitions and re-randomizing the permutation each time would yield a bounded error algorithm that inverts with $t \cdot o(\sqrt{{d}/{t^2}}) = o(\sqrt{d})$ queries, a contradiction.
\end{proof}

Note that the bound in the prior lemma could potentially be tightened to $O(\frac{t^2+1}{d})$. This bound would follow from replacing classical repetition with amplitude amplification, but naive amplitude amplification does not work as the the inverse oracle is not available. An alternative proof could show the improved bound, though the loose bound will suffice for our application.

Given that no algorithm can place significant mass on $\pi^{-1}(1)$, we can bound the distance increase between the potential states of the algorithm from one application of the oracle, as captured by the following lemma.
The lemma allows the states to depend on $\pi$, assuming only a bound on the amplitude placed on $\pi^{-1}(1)$.

\begin{lemma}[One-query progress]
\label{lem:singlequery-progress}
Let $\ket\psi$ and $\ket\phi$ be normalized states of the
query register and an arbitrary workspace. Write
\[
\ket\psi=\sum_{x=0}^{d-1}\ket{x}\ket{a_x},
\qquad
\ket\phi=\sum_{x=0}^{d-1}\ket{x}\ket{b_x},
\]
where the workspace vectors need not be normalized. Then
\begin{align}
\left\|(P_\pi\otimes I)\ket\psi
 -(R_\varepsilon P_\pi\otimes I)\ket\phi\right\|^2
&\leq {}\|\ket\psi-\ket\phi\|^2+2(1-\sqrt{1-\varepsilon^2})
\nonumber\\
&\quad+2\varepsilon\bigl(\|\ket{a_{\pi^{-1}(1)}}\|+\|\ket{b_{\pi^{-1}(1)}}\|\bigr) 
\label{eq:singlequery-progress}
\end{align}
\end{lemma}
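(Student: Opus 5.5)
The plan is to work with inner products rather than expanding the difference vector directly, using that both $P_\pi\otimes I$ and $R_\varepsilon P_\pi\otimes I$ are unitary and both states are normalized. Write $A=P_\pi\otimes I$ and $y=\pi^{-1}(1)$. Since $\pi(0)=0$ and $\pi$ is a bijection, $y\neq 0$.

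For normalized vectors $\|\ket u-\ket v\|^2=2-2\,\mathrm{Re}\langle u|v\rangle$. So the left-hand side of \eqref{eq:singlequery-progress} equals $2-2\,\mathrm{Re}\bra\psi A^\dagger (R_\varepsilon\otimes I) A\ket\phi$, and $\|\ket\psi-\ket\phi\|^2=2-2\,\mathrm{Re}\langle\psi|\phi\rangle$. It therefore suffices to show
\[
\mathrm{Re}\langle\psi|\phi\rangle-\mathrm{Re}\bra\psi A^\dagger (R_\varepsilon\otimes I) A\ket\phi
\leq (1-\sqrt{1-\varepsilon^2})+\varepsilon\bigl(\|\ket{a_y}\|+\|\ket{b_y}\|\bigr).
\]

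The key step is to compute the conjugated rotation $P_\pi^\dagger R_\varepsilon P_\pi$. Because $R_\varepsilon$ acts nontrivially only on $\mathrm{span}\{\ket0,\ket1\}$ and $P_\pi^\dagger$ maps $\ket0\mapsto\ket0$ and $\ket1\mapsto\ket y$, the conjugate is a rotation on $\mathrm{span}\{\ket0,\ket y\}$ and the identity elsewhere. With $c=\sqrt{1-\varepsilon^2}$,
\[
P_\pi^\dagger R_\varepsilon P_\pi = I+(c-1)\bigl(\ket0\!\bra0+\ket y\!\bra y\bigr)+\varepsilon\bigl(\ket y\!\bra0-\ket0\!\bra y\bigr).
\]
Substituting the decompositions of $\ket\psi$ and $\ket\phi$ gives
\[
\bra\psi A^\dagger (R_\varepsilon\otimes I)A\ket\phi-\langle\psi|\phi\rangle
=(c-1)\bigl(\langle a_0|b_0\rangle+\langle a_y|b_y\rangle\bigr)+\varepsilon\bigl(\langle a_y|b_0\rangle-\langle a_0|b_y\rangle\bigr).
\]

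It remains to bound the negative real part of this expression, which takes two Cauchy--Schwarz estimates.

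\emph{First term.} Its modulus is at most $(1-c)\bigl(\|a_0\|\|b_0\|+\|a_y\|\|b_y\|\bigr)$. Applying Cauchy--Schwarz to the two-term sum,
\[
\|a_0\|\|b_0\|+\|a_y\|\|b_y\|\leq\sqrt{\|a_0\|^2+\|a_y\|^2}\,\sqrt{\|b_0\|^2+\|b_y\|^2}\leq 1,
\]
by normalization. This step uses $y\neq0$, so that $a_0$ and $a_y$ are distinct components. The first term therefore contributes at most $1-c$.

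\emph{Second term.} Its modulus is at most $\varepsilon\bigl(\|a_y\|\|b_0\|+\|a_0\|\|b_y\|\bigr)\le\varepsilon\bigl(\|a_y\|+\|b_y\|\bigr)$, since $\|a_0\|,\|b_0\|\le1$.

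Multiplying both contributions by $2$ yields exactly \eqref{eq:singlequery-progress}. The only step I expect to need real care is the explicit form of $P_\pi^\dagger R_\varepsilon P_\pi$, in particular the sign and orientation of the off-diagonal terms. Even so, only their magnitudes enter the final bound, so the argument is robust to that detail.
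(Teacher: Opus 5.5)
Your proposal is correct and follows essentially the same route as the paper. Both proofs rewrite the change in squared distance via unitarity as $2\operatorname{Re}\bigl(\langle\psi|\phi\rangle-\langle\psi|(P_\pi^\dagger R_\varepsilon P_\pi\otimes I)|\phi\rangle\bigr)$, reduce it to the $\ket{0}$ and $\ket{\pi^{-1}(1)}$ components, and bound the resulting $(1-\sqrt{1-\varepsilon^2})$ and $\varepsilon$ terms by Cauchy--Schwarz and normalization. Your explicit use of $\pi^{-1}(1)\neq 0$ (from $\pi(0)=0$) is exactly what justifies the paper's terse claim that the first real part is at most $1$.
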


\begin{proof}
Expanding the squared distance gives the exact identity
\begin{align*}
&\left\|(P_\pi\otimes I)\ket\psi
 -(R_\varepsilon P_\pi\otimes I)\ket\phi\right\|^2
 -\|\ket\psi-\ket\phi\|^2\\
&\quad=2\operatorname{Re}\left(
 \langle\psi|\phi\rangle
 -\langle\psi|(P_\pi^\dagger R_\varepsilon P_\pi\otimes I)|\phi\rangle
 \right)\\
&\quad=2\operatorname{Re}\Bigl(
 \langle a_0|b_0\rangle
 +\langle a_{\pi^{-1}(1)}|b_{\pi^{-1}(1)}\rangle\\
&\qquad\qquad
 -\langle a_0|\bigl(\sqrt{1-\varepsilon^2}\ket{b_0}
                    -\varepsilon\ket{b_{\pi^{-1}(1)}}\bigr)
 -\langle a_{\pi^{-1}(1)}|
       \bigl(\varepsilon\ket{b_0}
             +\sqrt{1-\varepsilon^2}\ket{b_{\pi^{-1}(1)}}\bigr)
 \Bigr)\\
&\quad=2(1-\sqrt{1-\varepsilon^2})\operatorname{Re}
 \bigl(\langle a_0|b_0\rangle+\langle a_{\pi^{-1}(1)}|b_{\pi^{-1}(1)}\rangle\bigr)
 +2\varepsilon\operatorname{Re}
 \bigl(\langle a_0|b_{\pi^{-1}(1)}\rangle-\langle a_{\pi^{-1}(1)}|b_0\rangle\bigr).
\end{align*}
The first equality uses unitarity, and the second cancels all components outside the two coordinates affected by \(R_\varepsilon\).
By Cauchy--Schwarz and normalization, the first real part is
at most $1$, while the second is at most
$\|\ket{a_0}\|\|\ket{b_{\pi^{-1}(1)}}\|+\|\ket{a_{\pi^{-1}(1)}}\|\|\ket{b_0}\|
\leq\|\ket{a_{\pi^{-1}(1)}}\|+\|\ket{b_{\pi^{-1}(1)}}\|$.
\end{proof}

With these lemmas in place, we can prove the complete lower bound.

\begin{theorem}[Discrete distinguishing lower bound]
\label{thm:discrete-distinguishing-lower}
Let $d\geq3$ and $0<\varepsilon\leq1/2$. For any $T$-query
forward-only algorithm, let $\rho_U$ and $\rho_V$ be its final
states averaged over $\pi$ in the two oracle cases above. Then
\begin{equation}
\frac12\|\rho_U-\rho_V\|_1
\leq
O\left(\varepsilon \sqrt{T}
       +\frac{\varepsilon^{1/2} T^{3/4}}{d^{1/8}}\right).
\label{eq:finite-dimensional-distinguishing}
\end{equation}
Consequently, distinguishing the two promises with success
probability at least $2/3$ requires
\[
T=\Omega\left(
\min\left\{\varepsilon^{-2},
             d^{1/6}\varepsilon^{-2/3}\right\}
\right).
\]
If $d\geq\varepsilon^{-8}$, the query
complexity is $\Theta(\varepsilon^{-2})$.
\end{theorem}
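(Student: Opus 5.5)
We run a hybrid argument on purified states, one fixed $\pi$ at a time, and average over $\pi$ only at the end. Purify the algorithm so that it alternates $\pi$-independent unitaries $W_0,\dots,W_T$ with oracle calls. Let $\ket{\psi_k^\pi}$ and $\ket{\phi_k^\pi}$ be the states just before the $(k+1)$-st query with oracle $U=P_\pi$ and $V=R_\varepsilon P_\pi$ respectively, and set $\delta_k^2=\|\ket{\psi_k^\pi}-\ket{\phi_k^\pi}\|^2$, so that $\delta_0=0$. The $W_j$ preserve distances. Lemma~\ref{lem:singlequery-progress} then gives
\[
\delta_{k+1}^2\le \delta_k^2+2(1-\sqrt{1-\varepsilon^2})+2\varepsilon\bigl(\|a^{(k)}_{\pi^{-1}(1)}\|+\|b^{(k)}_{\pi^{-1}(1)}\|\bigr).
\]
Since $1-\sqrt{1-\varepsilon^2}\le\varepsilon^2$ for $\varepsilon\le 1/2$, telescoping yields
\[
\delta_T^2\le 2T\varepsilon^2+2\varepsilon\sum_{k<T}\bigl(\|a^{(k)}_{\pi^{-1}(1)}\|+\|b^{(k)}_{\pi^{-1}(1)}\|\bigr).
\]

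Next we take $\mathbb E_\pi$ and control the cross terms with Lemma~\ref{lem:hidden-preimage-amplitude}. The state $\ket{\psi_k^\pi}$ is produced with $k$ forward queries to $P_\pi$, so the lemma applies to it directly. The state $\ket{\phi_k^\pi}$ is produced with queries to $R_\varepsilon P_\pi$, but $R_\varepsilon$ is a $\pi$-independent unitary, so each query can be simulated by one query to $P_\pi$ followed by $R_\varepsilon$ as workspace processing. The lemma therefore also applies to $\ket{\phi_k^\pi}$; designating the query register as $r$, we get $\mathbb E_\pi\|a^{(k)}_{\pi^{-1}(1)}\|^2=O\bigl(\sqrt{(k^2+1)/d}\bigr)$, and the same for $b$. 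Jensen's inequality converts this to $\mathbb E_\pi\|a^{(k)}_{\pi^{-1}(1)}\|\le O\bigl(((k^2+1)/d)^{1/4}\bigr)=O\bigl((T^2/d)^{1/4}\bigr)$. Summing over $k<T$,
\[
\mathbb E_\pi\delta_T^2\le O\bigl(T\varepsilon^2+\varepsilon T^{3/2}d^{-1/4}\bigr).
\]

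Finally we pass to trace distance. The joint state $\frac{1}{|S|}\sum_\pi\ket{\pi}\bra{\pi}\otimes\ket{\psi_T^\pi}\bra{\psi_T^\pi}$ has partial trace $\rho_U$ (after also tracing purifying registers), and similarly for $\rho_V$. Monotonicity of trace distance and joint convexity give
\[
\tfrac12\|\rho_U-\rho_V\|_1\le\mathbb E_\pi\|\ket{\psi_T^\pi}-\ket{\phi_T^\pi}\|\le\sqrt{\mathbb E_\pi\delta_T^2},
\]
where the first step uses that the pure-state trace distance is at most the Euclidean distance. Taking square roots and using $\sqrt{x+y}\le\sqrt x+\sqrt y$ gives \eqref{eq:finite-dimensional-distinguishing}: the $\varepsilon\sqrt T$ term plus $\varepsilon^{1/2}T^{3/4}d^{-1/8}$.

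Success probability $2/3$ needs trace distance at least $1/3$, so one of the two terms must be $\Omega(1)$. That means either $T=\Omega(\varepsilon^{-2})$ or $T^{3/4}=\Omega(d^{1/8}\varepsilon^{-1/2})$, i.e.\ $T=\Omega(d^{1/6}\varepsilon^{-2/3})$. If $d\ge\varepsilon^{-8}$ then $d^{1/6}\varepsilon^{-2/3}\ge\varepsilon^{-2}$, so the minimum is $\varepsilon^{-2}$. The matching upper bound $O(\varepsilon^{-2})$ comes from preparing copies of $U\ket0$ and running standard discrimination, since the two states have overlap $\sqrt{1-\varepsilon^2}$.

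The main obstacle is the second step: justifying that Lemma~\ref{lem:hidden-preimage-amplitude} applies to the intermediate states of the $V$-branch, and to the intermediate $k$ rather than only the final state. The simulation of $R_\varepsilon P_\pi$ by $P_\pi$ plus a fixed unitary handles this cleanly, provided the lemma is read as a statement about any algorithm with at most $k$ queries. A smaller point is that the Jensen step makes us lose a fourth root, which is what produces the $d^{-1/8}$ term.
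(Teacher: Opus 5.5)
Your proposal is correct and follows the paper's proof essentially step for step. It uses the same ingredients: the per-query bound of Lemma~\ref{lem:singlequery-progress}; Lemma~\ref{lem:hidden-preimage-amplitude} applied to both branches, with each $R_\varepsilon P_\pi$ query simulated by one $P_\pi$ query followed by a fixed rotation; Cauchy--Schwarz (Jensen) to obtain $O(\sqrt{t+1}/d^{1/4})$; telescoping; and convexity plus the pure-state Euclidean bound to pass to trace distance. The differences are cosmetic: you average over $\pi$ at the end rather than at each step, and you bound each summand by $(T^2/d)^{1/4}$ instead of summing $\sqrt{s}$.
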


\begin{proof}
Let $\ket{\psi_t^\pi}$ and
$\ket{\phi_t^\pi}$ denote the algorithm's states after $t$ queries in
the $U$ and $V$ executions, respectively, under permutation $\pi$. Let
\[
D_t^2=\mathbb E_\pi
\|\ket{\psi_t^\pi}-\ket{\phi_t^\pi}\|^2.
\]
The initial states coincide, so $D_0=0$. Oracle-independent unitaries between queries preserve each pairwise distance, so it suffices to analyze the queries.

Immediately before query $t+1$, apply \cref{lem:singlequery-progress} to the two states. Each has been generated by $t$ forward permutation queries: in the NO case corresponding to oracle $V$, one can replace each $V$ call by $P_\pi$ followed by the known rotation $R_\varepsilon$, to view the algorithm as querying $P_\pi$ rather than $V$. Thus \cref{lem:hidden-preimage-amplitude} applies to each execution, which combined with Cauchy-Schwarz gives
\[
\mathbb E_\pi\|a_{\pi^{-1}(1)}\|,
\ \mathbb E_\pi\|b_{\pi^{-1}(1)}\|
\leq O\left(\frac{\sqrt{t+1}}{d^{1/4}}\right).
\]
Taking expectations in \eqref{eq:singlequery-progress}  yields
\[
D_{t+1}^2\leq D_t^2+2(1-\sqrt{1-\varepsilon^2})
              +O\left(\frac{\varepsilon\sqrt{t}}{d^{1/4}}\right).
\]
Summing over $t=0,\ldots,T-1$, and using
$1-\sqrt{1-\varepsilon^2}=\varepsilon^2/(1+\sqrt{1-\varepsilon^2})\leq\varepsilon^2$, gives
\begin{align*}
D_T^2
&\leq2T(1-\sqrt{1-\varepsilon^2})+O\left(\frac{\varepsilon}{d^{1/4}}
                  \sum_{s=1}^T\sqrt{s}\right)\\
&\leq2T\varepsilon^2
       +O\left(\frac{\varepsilon T^{3/2}}{d^{1/4}}\right).
\end{align*}
Convexity of trace distance, its pure-state bound by Euclidean
distance, and Cauchy--Schwarz imply
\[
\frac12\|\rho_U-\rho_V\|_1
\leq\mathbb E_\pi
\|\ket{\psi_T^\pi}-\ket{\phi_T^\pi}\|
\leq D_T.
\]
Success probability $2/3$ requires trace distance at least $1/3$, establishing the claim.
\end{proof}

We do not characterize the optimal dimension dependence of this bound, and it is likely that it can be improved.

\section*{AI Statement}

All of the main results and definitions are due to the authors. GPT-6-Astra was used to check and simplify calculations, help prepare the bibliography, and find related works.

\section*{Acknowledgments}

This work received support from the National Science Foundation (grant 26-17356) and the Department of Energy (grant DE-SC0020264 and the Office of Science, Office of Advanced Scientific Computing Research, Accelerated Research in Quantum Computing program).

\printbibliography
\end{document}